\documentclass[conference,a4paper]{IEEEtran}

\usepackage{graphicx}
\usepackage{mathtools}
\usepackage{etoolbox}
\usepackage{lipsum}
\usepackage{amsfonts}
\usepackage{cite}
\usepackage{amsmath,amsthm}
\usepackage{amssymb}

\setlength{\textfloatsep}{0.05cm}

\let\bbordermatrix\bordermatrix
\patchcmd{\bbordermatrix}{8.75}{4.75}{}{}
\patchcmd{\bbordermatrix}{\left(}{\left[}{}{}
\patchcmd{\bbordermatrix}{\right)}{\right]}{}{}

\newtheorem{theorem}{Theorem}[section]
\newtheorem{lemma}[theorem]{Lemma}
\newtheorem{definition}[theorem]{Definition}

\newtheorem{proposition}[theorem]{Proposition}


%
%
%

%
%
\newcommand{\sr}{\stackrel}

\newcommand{\rar}{\rightarrow}

\newcommand{\tri}{\sr{\triangle}{=}}

%
%

%
%
%
\newcommand{\be}{\begin{equation}}
\newcommand{\ee}{\end{equation}}
\newcommand{\bea}{\begin{eqnarray}}
\newcommand{\eea}{\end{eqnarray}}
\newcommand{\bes}{\begin{eqnarray*}}
\newcommand{\ees}{\end{eqnarray*}}
%
%
\newcommand{\bi}{\begin{itemize}}
\newcommand{\ei}{\end{itemize}}
\newcommand{\ben}{\begin{enumerate}}
\newcommand{\een}{\end{enumerate}}
%
%

%
%
\newcommand{\bp}{\begin{problem}}
\newcommand{\ep}{\end{problem}}

\newcommand{\hst}{\hspace{.2in}}

\newcommand{\noi}{\noindent}

%
%

%

\begin{document}

\title{Nonanticipative Rate Distortion Function\\ for General  Source-Channel Matching}

\author{
  \IEEEauthorblockN{Christos Kourtellaris, Charalambos D.~Charalambous, Photios A.~Stavrou}
  \IEEEauthorblockA{Dep. of Electrical \& Computer Engineering, University of Cyprus, Nicosia, Cyprus\\
    Email: \{kourtellaris.christos, chadcha, stavrou.fotios\}@ucy.ac.cy}

}



\maketitle

\begin{abstract}
In this paper we invoke a nonanticipative information Rate Distortion Function (RDF) for sources with memory, and we analyze its importance in probabilistic matching of the source to the channel so that transmission of a symbol-by-symbol code with memory without anticipation
 is optimal, with respect to an average distortion and excess distortion
probability. We show achievability of the symbol-by-symbol code with memory without anticipation, and we evaluate  the probabilistic performance of the  code for a Markov source.
\end{abstract}

\section{Introduction}

\par We consider a nonanticipative information Rate Distortion Function (RDF) for
sources with memory, and we investigate its importance in joint source-channel coding  JSCC with emphasis on symbol-by-symbol code with memory without anticipation  (e.g. the encoder
and decoder at each time $i$ process samples independently, with memory on past symbols, and without anticipation with respect to symbols occurring at times $j>i$). The aim is to
match probabilistically the source to the channel, and evaluate its performance with respect to average
distortion and excess distortion probability. For memoryless sources and channels, necessary
and sufficient conditions for symbol-by-symbol transmission are given in \cite{gastpar2003} (see also \cite{kverdu})

\par In this paper, we first observe that a necessary condition for probabilistic matching of a source with memory to the channel
so that symbol-by-symbol transmission with memory without anticipation is feasible, is the realization of the optimal reproduction distribution by a cascade of
an encoder-channel-decoder processing information causally. Consequently, we consider a nonanticipative information RDF which
is realizable in the above sense, and we proceed to obtain the closed form expression of  the
reproduction distribution which achieves the infimum over the fidelity set.
Moreover, we prove under certain conditions involving the nonanticipative information RDF, and the capacity
of certain channels with memory and feedback, that symbol-by-symbol code with memory without anticipation is achievable.
\par Finally we evaluate the performance of a stationary ergodic Markov source using
symbol-by-symbol uncoded  transmission  (e.g., the encoder and decoder are unitary operations to their inputs), with
 the channel replaced by the optimal reproduction conditional distribution of the nonanticipative RDF (e.g., the source is not  matched to the channel), by computing an upper bound on the excess distortion probability using a variation of Hoeffding's inequality \cite{glynn2002}.
 Finally we note that nonanticipative information RDF is investigated by the authors in the context of realizable filters in \cite{charalambous-stavrou-ahmed2013}, where examples are given for multi-dimensional partially observable Gaussian processes.

\section{Symbol-by-Symbol codes with Memory Without Anticipation}
\label{sbs}
\par In this section we define the elements of a  symbol-by-symbol code with memory without anticipation.

Let ${\mathbb{N}}\tri\{0,1,\dots\}$, $\mathbb{N}^n\tri\{0,1,\dots,n\}$. The spaces ${\cal X},{\cal A},{\cal B},{\cal Y}$ denote the source output, channel input,
channel output, and decoder output alphabets, respectively, which are assumed to be complete separable metric spaces (Polish spaces)  to avoid excluding continuous alphabets. We define their product spaces by  ${\cal X}_{0,n}\tri\times_{i=0}^{n}{\cal X}$,
${\cal A}_{0,n}\tri\times_{i=0}^{n}{\cal A}$, ${\cal B}_{0,n}\tri\times_{i=0}^{n}{\cal B}$,
${\cal Y}_{0,n}\tri\times_{i=0}^{n}{\cal Y}$. Let
$x^n\tri\{x_0, x_1,\dots, x^n\}\in{\cal X}_{0,n}$ denote the source sequence
of length $n$, and similarly for channel input, channel output, decoder  (reproduction) output sequences,
$a^n\in{\cal A}_{0,n}$, $b^n\in{\cal B}_{0,n}$, $y^n\in{\cal Y}_{0,n}$, respectively. We
associate the above product spaces by their measurable spaces, as usual. Next, we introduce
the various distributions of the blocks appearing in Fig.\ref{cs}.

\begin{definition}
\label{source}
(Source) The source is a sequence of conditional
distributions $\{P_{X_i|X^{i-1}}(d{x_i}|x^{i-1}): \ \forall i\in{\mathbb{N}}^n\}$ defined by
\vspace{-0.3cm}
\begin{align}
P_{X^n}(d{x}^n)\tri\otimes_{i=0}^{n} P_{X_i|X^{i-1}}(d{x}_i|x^{i-1}).\nonumber
\end{align}
\end{definition}
\vspace{-0.2cm}

\begin{figure}
\begin{center}
\includegraphics[bb= -10 23 400 200, scale=0.5]{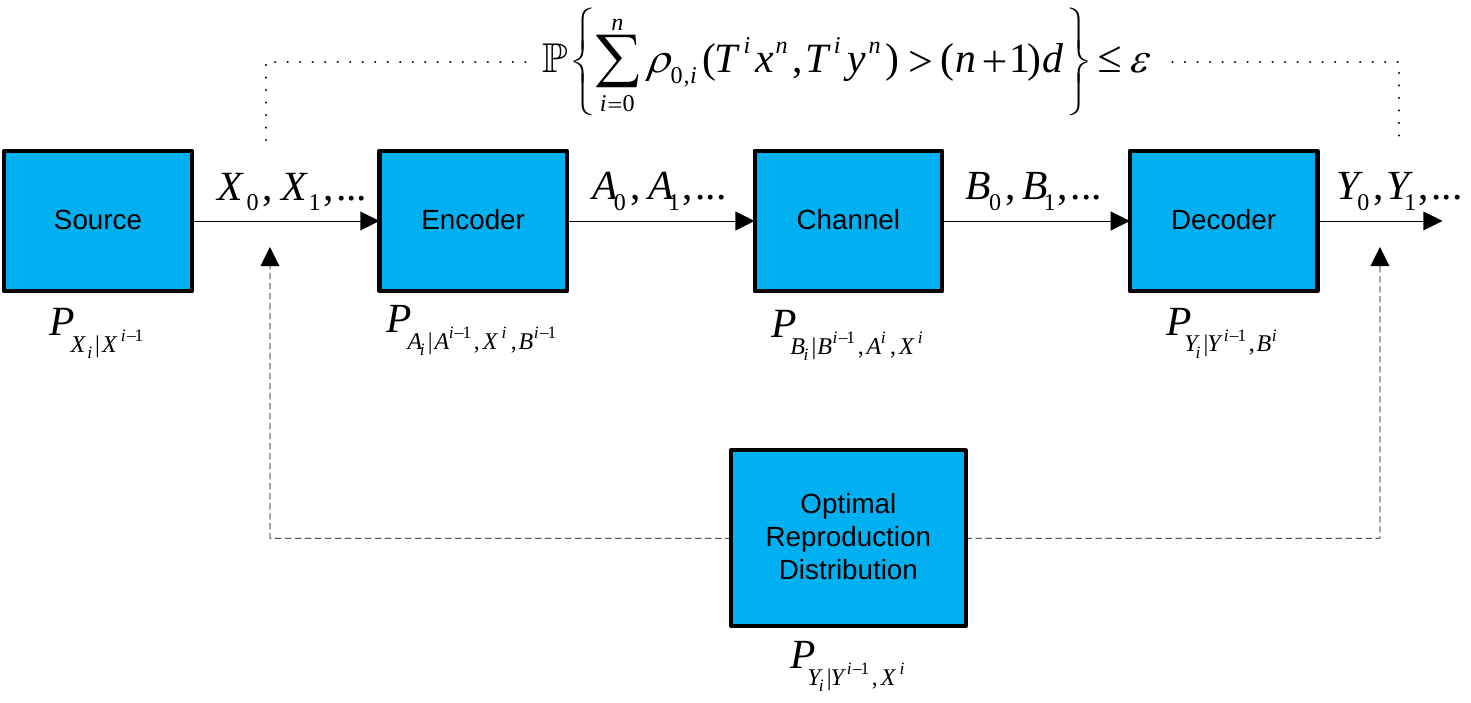}
\caption{Communication scheme with feedback.}
 \label{cs}
\end{center}
\end{figure}

\begin{definition}
\label{encoder}
(Encoder) The encoder is a sequence of conditional
distributions $\{P_{A_i|A^{i-1},B^{i-1},X^i}(d{a_i}|a^{i-1},b^{i-1},x^i):\ \forall i\in{\mathbb{N}}^n\}$
defined by
\begin{align}
&{\overrightarrow P}_{A^n|B^{n-1},X^n}(d{a}^n|b^{n-1},x^n)\nonumber\\
&\tri\otimes_{i=0}^{n}P_{A_i|A^{i-1},B^{i-1},X^i}(d{a}_i|a^{i-1},b^{i-1},x^i). \nonumber
\end{align}
\end{definition}
\vspace{-0.1cm}
Thus, the encoder is nonanticipative in the sense that at each time $i\in{\mathbb{N}}^n$, $P_{A_i|A^{i-1},B^{i-1},X^i}(d{a}_i|a^{i-1},b^{i-1},x^i)$ is a measurable function of past and present symbols $x^i\in{\cal X}_{0,i}$ and past symbols $a^{i-1}\in{\cal A}_{0,i-1}, b^{i-1}\in{\cal B}_{0,i-1}$.
\begin{definition}
\label{channel}
(Channel) The channel is a sequence of conditional
distributions $\{P_{B_i|B^{i-1},A^{i},X^i}(d{b}_i|b^{i-1},a^{i},x^i):\ \forall i\in{\mathbb{N}}^n\}$
defined by
\begin{align}
&{\overrightarrow P}_{B^n|A^{n},X^n}(d{b}^n|a^{n},x^n)\nonumber\\
&\tri\otimes_{i=0}^{n}P_{B_i|B^{i-1},A^{i},X^i}(d{b}_i|b^{i-1},a^{i},x^i).\nonumber
\end{align}
\end{definition}

Thus the channel has memory, feedback and it is nonanticipative with respect to  the source sequence.

\begin{definition}
\label{decoder}
(Decoder) The decoder is a sequence of conditional
distributions $\{P_{Y_i|Y^{i-1},B^{i}}(d{y}_i|y^{i-1},b^{i}): \forall i\in{\mathbb{N}}^n\}$
\begin{align}
{\overrightarrow P}_{Y^n|B^{n}}(d{y}^n|b^{n})\tri
\otimes_{i=0}^{n}P_{Y_i|Y^{i-1},B^i}(d{y}_i|y^{i-1},b^i).\nonumber
\end{align}
\end{definition}

Definitions~\ref{source}-\ref{decoder}, of source-encoder-channel-decoder are  general,  they have memory and feedback  without anticipation, hence we call the source-channel code symbol-by-symbol code with memory without anticipation.
Given the source, encoder, channel,  decoder, we can define uniquely
the joint measure  by
\begin{align}
& P_{X^n,A^n,B^n,Y^n}(d{x}^n,d{a}^n,d{b}^n,d{y}^n)\nonumber\\
& =\otimes_{i=0}^{n}P_{Y_i|Y^{i-1},B^i}(d{y}_i|y^{i-1},b^i)\nonumber\\
&\otimes P_{B_i|B^{i-1},A^{i},X^i}(d{b}_i|b^{i-1},a^{i},x^i)\nonumber\\
&  \otimes P_{A_i|A^{i-1},B^{i-1},X^i}(d{a}_i|a^{i-1},b^{i-1},x^i)\otimes P_{X_i|X^{i-1}}(d{x}_i|x^{i-1}). \label{joint}
\end{align}
Thus, we have indirectly assumed  the following Markov chains (MCs) hold.
\begin{align}
& (A^{i-1},B^{i-1},Y^{i-1})\leftrightarrow X^{i-1}\leftrightarrow X_i, \ \ \forall i\in\mathbb{N}^n \label{mc1} \\
& Y^{i-1}\leftrightarrow (A^{i-1},B^{i-1},X^{i})\leftrightarrow A_i, \ \ \forall i\in\mathbb{N}^n \label{mc2}\\
& Y^{i-1}\leftrightarrow (A^{i},B^{i-1},X^{i})\leftrightarrow B_i, \ \ \forall i\in\mathbb{N}^n \label{mc3}\\
& (A^{i},X^{i})\leftrightarrow (B^i, Y^{i-1})\leftrightarrow Y_i, \ \ \forall i\in\mathbb{N}^n. \label{mc4}
\end{align}
\noi The distortion function between the source and its reproduction is a measurable function $d_{0,n}:{\cal X}_{0,n}\times{\cal Y}_{0,n}\mapsto [0,\infty)$,
\begin{align}
d_{0,n}(x^n,y^n)\tri\sum_{i=0}^{n}{\rho}_{0,i}({T}^i{x^n},T^i{y^n})\nonumber
\end{align}
where $({T}^i{x^n},{T}^i{y^n})$ are the shift operations on $(x^n,y^n)$, respectively.
For a single letter distortion function we take ${\rho}_{0,i}(T^ix^n,T^iy^n)={\rho}(x_i,y_i)$.
The cost of transmitting  symbols over
the channel is a measurable function
\begin{align}
&c_{0,n}\hspace{-0.1cm}:\hspace{-0.1cm}{\cal A}_{0,n}\hspace{-0.1cm}\times\hspace{-0.1cm}{\cal Y}_{0,n-1}\hspace{-0.1cm}\mapsto \hspace{-0.1cm}[0,\infty), c_{0,n}(a^n,y^{n-1})\hspace{-0.1cm}\tri\hspace{-0.1cm}\sum_{i=0}^{n}
{\gamma}_{0,i}(a^i,y^{i-1})\nonumber
\end{align}
\noi Next, we state the definition of a symbol-by-symbol code with memory without anticipation.

\begin{definition}
\label{sbsc}
(Symbol-by-Symbol code with Memory without Anticipation).
 An (n,d,$\epsilon$,P) symbol-by-symbol
code with memory without anticipation for (${\cal X}_{0,n}, {\cal A}_{0,n}, {\cal B}_{0,n}, {\cal Y}_{0,n}, P_{X^n},
{\overrightarrow P}_{B^n|A^n,X^n}, d_{0,n}, c_{0,n}$) is a  code
$\{P_{A_i|A^{i-1}, B^{i-1},X^i}(\cdot|\cdot):\forall i \in\mathbb{N}^n\}$, $\{P_{Y_i|Y^{i-1},B^i}(\cdot|\cdot):\forall i \in\mathbb{N}^n\}$
with excess distortion probability
\begin{align}
{\mathbb P}\Big\{d_{0,n}(x^n,y^n)>(n+1)d\Big\}\leq\epsilon, \ \epsilon\in(0,1), \ d\geq 0 \nonumber
\end{align}
and transmission cost $\frac{1}{n+1} {\mathbb E}\Big\{c_{0,n}(A^n,Y^{n-1})\Big\}\leq P, \  P\geq 0$.
\end{definition}

\begin{definition}(Minimum Excess Distortion)
\label{asbsc}
The minimum excess distortion achievable by a symbol-by-symbol
code with memory without anticipation $(n,d,\epsilon,P)$ is defined by
\begin{align}
&  D^o(n,\epsilon, P)\tri\inf\Big\{d:  \exists(n,d,\epsilon,P)\  \ \mbox{symbol-by- symbol code} \nonumber \\
&  \mbox{with memory without anticipation}\Big\} \nonumber
\end{align}
\end{definition}

Our definition of symbol-by-symbol code with memory without anticipation is randomized, hence it embeds  deterministic codes as a special case
  \cite{kverdu}.

\section{Nonanticipative Versus Classical RDF}
\label{cnrdf}
In this section, we first establish the claim that the classical RDF for sources with memory, is not
the appropriate measure for lossy compression in symbol-by-symbol codes with memory without anticipation. Recall that the necessary conditions for transmission of
symbol-by-symbol codes with memory without anticipation (this is also true for  memoryless sources and channels) are the following.

\begin{enumerate}
\item Realization
of the optimal reproduction distribution of lossy compression with fidelity by an encoder-channel-decoder scheme, processing information causally;

\item Computation of the RDF and that of the optimal reproduction distribution so that  probabilistic matching of the source and  channel is feasible.
\end{enumerate}
%

\par Consider the average fidelity set
\begin{align}
&{\cal Q}_{0,n}(D)\tri \Big\{{P}_{Y^n|X^n}: \nonumber  \\
&\frac{1}{n+1}\int d_{0,n}(x^n,y^n)
({P}_{Y^n|X^n}\otimes P_{X^n})(dx^n,dy^n)\leq D\Big\}.  \nonumber 
\end{align}
 Here, $P_{X^n}(\cdot)$ is the source distribution and
$P_{Y^n|X^n}(\cdot|x^n)$ is the reproduction distribution, and
it is known that for stationary ergodic source and single letter distortion, the OPTA   is given by the RDF \cite{berger}
\begin{align}
R(D)&=\lim_{n\rar \infty} R_{0,n}(D) \label{rdf} \\
R_{0,n}(D)&=\inf_{{P}_{Y^n|X^n}\in {\cal Q}_{0,n}(D)}\frac{1}{n+1}I(X^n;Y^n) \label{rdfon3}
\end{align}
provided the infimum is achievable. It is also well known that if the infimum in (\ref{rdfon3}) exists, then \cite{berger}
\begin{align}
{P}_{Y^n|X^n}(dy^n|x^n)=\frac{e^{sd_{0,n}({x}^n, { y}^{n})}P_{Y^n}(d{y}^n)}
{\int_{{\cal  Y}_{0,n}}e^{sd_{0,n}({x}^n, { y}^{n})}P_{Y^n}(d{y}^n)} \label{oprecdnc}
\end{align}
where $s\in(-\infty,0]$ is the Lagrange multiplier associated with the fidelity set ${\cal Q}_{0,n}(D)$. Clearly, by Bayes' rule
\begin{align}
{P}_{Y^n|X^n}(dy^n|x^n) =&\otimes_{i=0}^n {P}_{Y_i|X^n, Y^{i-1}}(dy_i|x^n,y^{i-1})  \label{cr}
\end{align}
and hence the optimal reproduction $y_i$ at time $i$ of $x_i$ depends on the past reproductions and past and present source symbols $\{y^{i-1}, x^i\}$, and the future source symbols $\{x_{i+1}, \ldots x^n\}, n \geq i$. Thus, in general
the optimal reproduction distribution is anticipative with respect to the source symbols, and hence it is not realizable in the sense described earlier.
 Moreover for sources with memory it is very difficult to compute the  value of $R(D)$. Even for the Binary Symmetric Markov Source (BSMS) the exact expression of $R(D)$ is not known \cite{berger1977}.  The independent source and Gaussian source are exception.

Now, we  introduce the nonanticipative information RDF which by construction is realizable, and  in Section \ref{ord} we compute its closed form expression.
 Given a source ${P}_{X^n}(dx^n)$ and a causal conditional distribution defined by
\begin{align}
\overrightarrow{P}_{Y^n|X^n}(dy^n|x^n) \tri \otimes_{i=0}^{n}P_{Y_i|Y^{i-1},X^i}(dy_i|y^{i-1},x^i) \label{cc1}
\end{align}
then the joint distribution $P_{Y^n,X^n}$ and marginal distribution $P_{Y^n}$ are uniquely defined.
 Introduce the information measure ($\mathbb{D}(.|.)$ denotes the relative entropy).
\begin{align}
I_{P_{X^n}}(X^n\rar Y^n)&\tri\mathbb{D}({\overrightarrow P}_{Y^n|X^n}\otimes P_{X^n}||{P}_{Y^n}\times P_{X^n})\nonumber\\
&\equiv\mathbb{I}_{X^n\rightarrow{Y^n}}(P_{X^n},{\overrightarrow P}_{Y^n|X^n}).\nonumber
\end{align}
Consider the fidelity set defined by
\begin{align}
\overrightarrow{\cal Q}_{0,n}(D) &\tri  \Big\{{\overrightarrow P}_{Y^n|X^n}: \frac{1}{n+1}\int_{{\cal X}_{0,n}\times {\cal Y}_{0,n}}d_{0,n}({x^n}
,{y^n})\nonumber\\
&\qquad {\overrightarrow P}_{Y^n|X^n}(dy^n|x^n)\otimes P_{X^n}(dx^n) \leq D\Big\}.\label{nafs}
\end{align}
Next, we introduce the nonanticipative information  RDF.

\begin{definition}(Nonanticipative Information RDF)
\label{nardf}
Given ${\overrightarrow {\cal Q}}_{0,n}(D)$, the nonanticipative information RDF is defined by
\begin{align}
{ R}^{na}_{0,n}(D) \tri \inf_{ \overrightarrow{P}_{Y^n|X^n}\in   {\overrightarrow {\cal Q}}_{0,n}(D)} \frac{1}{n+1} \mathbb{I}_{X^n\rightarrow{Y^n}}(P_{X^n},{\overrightarrow P}_{Y^n|X^n})\label{nardf11}
\end{align}
and its rate by ${ R}^{na}(D)=\lim_{n\rar\infty}{ R}^{na}_{0,n}(D)$
provided infimum and the limit exist.
\end{definition}

Clearly, if the minimum of ${ R}^{na}_{0,n}(D)$ exists the  optimal reproduction distribution
is nonanticipative, and hence realizable in the sense described before.

Next, draw the connection  between $R_{0,n}^{na}(D)$, ${ R}_{0,n}(D)$, and Gorbunov-Pinsker definition of nonanticipatory $\epsilon-$entropy, by first introducing the following equivalent statements of conditional independence.

\begin{lemma}\label{equivsta}
(Equivalent Statements of Nonanticipation)
The following are equivalent for $i=0,1\ldots,n-1$, $\forall n\in{\mathbb{N}}$.
\begin{enumerate}
\item $X_{i+1}^n\leftrightarrow (X^i,Y^{i-1}) \leftrightarrow Y_i $ forms a MC;
\item $X_{i+1}\leftrightarrow X^i \leftrightarrow Y^i $ forms a MC;
\item $X_{i+1}^n\leftrightarrow X^i \leftrightarrow Y^i $ forms a MC;
\item $P_{Y^n|X^n}(dy^n|x^n)= \overrightarrow{P}_{Y^n|X^n}(dy^n|x^n).$
\end{enumerate}
\end{lemma}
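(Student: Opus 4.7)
The plan is to take statement (4) as the pivot and prove the equivalences $(4)\Leftrightarrow(1)$, $(4)\Leftrightarrow(3)$, and $(3)\Leftrightarrow(2)$, so that all four are mutually equivalent. The intuition is that (4) is a factorization of the joint conditional $P_{Y^n|X^n}$ as a product of \emph{causal} kernels, (1) is the pointwise d-separation that each such factor depends only on $X^i$ rather than $X^n$, while (2)--(3) are the same property expressed in the ``reverse'' direction, namely that past reproductions reveal no information about future source symbols beyond what the past source already provides.

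For $(4)\Leftrightarrow(1)$, I would invoke uniqueness of the chain-rule disintegration. One always has $P_{Y^n|X^n}(dy^n|x^n)=\otimes_{i=0}^{n}P_{Y_i|Y^{i-1},X^n}(dy_i|y^{i-1},x^n)$, while by definition $\overrightarrow{P}_{Y^n|X^n}(dy^n|x^n)=\otimes_{i=0}^{n}P_{Y_i|Y^{i-1},X^i}(dy_i|y^{i-1},x^i)$; equality of the two product measures is then equivalent, for each $i$, to $P_{Y_i|Y^{i-1},X^n}=P_{Y_i|Y^{i-1},X^i}$, which is exactly (1). For $(4)\Rightarrow(3)$, I would marginalise both sides of the joint $P_{X^n,Y^n}=P_{X^n}\otimes \overrightarrow{P}_{Y^n|X^n}$ over $y_{i+1}^n$ to obtain $P_{X^n,Y^i}=P_{X^n}\prod_{j=0}^{i}P_{Y_j|Y^{j-1},X^j}$ and then over $x_{i+1}^n$ to obtain $P_{X^i,Y^i}=P_{X^i}\prod_{j=0}^{i}P_{Y_j|Y^{j-1},X^j}$, using that the product depends only on $(x^i,y^i)$; dividing yields $P_{X_{i+1}^n|X^i,Y^i}=P_{X_{i+1}^n|X^i}$, which is (3). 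For $(3)\Rightarrow(4)$, I would apply Bayes' rule to write
\begin{align*}
P_{Y_i|Y^{i-1},X^n}=P_{Y_i|Y^{i-1},X^i}\cdot\frac{P_{X_{i+1}^n|X^i,Y^i}}{P_{X_{i+1}^n|X^i,Y^{i-1}}},
\end{align*}
and observe that (3) collapses both numerator and denominator to $P_{X_{i+1}^n|X^i}$, hence (1) and therefore (4) hold.

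For $(3)\Leftrightarrow(2)$, the implication $(3)\Rightarrow(2)$ is immediate by marginalising over $x_{i+2}^n$. For the converse, I would use the chain-rule expansion $P_{X_{i+1}^n|X^i,Y^i}=\prod_{k=i+1}^{n}P_{X_k|X^{k-1},Y^i}$ and prove each factor equals $P_{X_k|X^{k-1}}$ by inserting the intermediate reproductions, $P_{X_k|X^{k-1},Y^i}=\int P_{X_k|X^{k-1},Y^{k-1}}\,P_{Y_{i+1}^{k-1}|X^{k-1},Y^i}(dy_{i+1}^{k-1})$, and applying (2) at time $k-1$ to the inner factor. The main technical obstacle will be to justify the Bayes-rule manipulations in the Polish-space setting where the relevant densities need not exist: I would interpret every conditional as a regular version of the conditional distribution and understand all equalities of conditionals as holding $P_{X^n,Y^n}$-almost surely, which suffices because $P_{X^n}\otimes\overrightarrow{P}_{Y^n|X^n}$ and $P_{X^n,Y^n}$ are uniquely determined by their disintegrations on the relevant $\sigma$-algebras.
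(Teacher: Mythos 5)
Your proof is correct and follows essentially the same route as the paper's own (much terser) argument: the paper also treats the equivalence of (1), (2), (4) as a disintegration/chain-rule fact, obtains (3)$\Rightarrow$(2) by integrating out $x_{i+2}^n$, and handles (2)$\Rightarrow$(3) by the induction that your tower-property expansion of $P_{X_{i+1}^n|X^i,Y^i}$ makes explicit. You have simply supplied the details the paper omits, so no further changes are needed.
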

\begin{proof}
The equivalency of 1),  2), 4) is easy. If 3) holds then $P_{X_{i+1}^n|X^i,Y^i}
(d{x}_{i+1}^n|x^i,  y^i)= P_{X_{i+1}^n|X^i}(d{x}_{i+1}^{n}|x^i)$ and hence 2) is obtained by integration.
By induction one can show that 2) implies 3).
\end{proof}

Clearly, ${ R}^{na}_{0,n}(D) \geq  R_{0,n}(D)$.
Next, we discuss the relation between ${R}^{na}_{0,n}(D)$ and  Gorbunov and Pinsker \cite{pinsker1973} nonanticipatory  $\epsilon-$entropy. Gorbunov and Pinsker \cite{pinsker1973}, restricted the fidelity set ${\cal Q}_{0,n}(D)$
to those reproduction distributions which satisfy the MC of Lemma~\ref{equivsta}, 3),
and introduced the nonanticipatory $\epsilon$-entropy defined by \cite{pinsker1973}
\begin{align}
R^{\varepsilon}_{0,n}(D)=\mathop{\inf_{{P}_{Y^n|X^n}\in {\cal Q}_{0,n}(D)}}_{{X_{i+1}^{n}}\leftrightarrow X^i \leftrightarrow Y^i, \ i=0,1,\dots,n-1
}\frac{1}{n+1}I(X^n;Y^n) \label{rdfon}
\end{align}
and the nonanticipatory message generation of the source by
$R^{\varepsilon}(D)=\lim_{n\rar\infty}R^{\varepsilon}_{0,n}(D)$
provided the infimum exists and the limit is finite. The MC in (\ref{rdfon}) means that  the reproduction distribution which minimizes
(\ref{rdfon}) can be realized via an encoder-channel-decoder, using nonanticipative operations (causal).\\
\noi In view of Lemma~\ref{equivsta} we have the following theorem.
\begin{theorem}
\label{gp}
(Equivalent Nonanticipative RDF)
The following holds
\begin{align}
R^{\varepsilon}_{0,n}(D)=R^{na}_{0,n}(D),  \hst \forall n\in\mathbb{N}. \label{gpe}
\end{align}
\end{theorem}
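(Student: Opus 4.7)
My plan is to prove \thm{Theorem}{gp} by establishing a bijection between the feasible sets of the two optimization problems in \eq{rdfon} and \eq{nardf11} and verifying that, under this bijection, the objective functionals coincide. The whole argument rests on part 4) of \thm{Lemma}{equivsta}, which asserts that the Markov chain $X_{i+1}^n \leftrightarrow X^i \leftrightarrow Y^i$ for $i=0,1,\dots,n-1$ is equivalent to the identity $P_{Y^n|X^n}(dy^n|x^n) = \overrightarrow{P}_{Y^n|X^n}(dy^n|x^n)$.

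First I would handle the feasibility side. Given any joint $(P_{X^n}, P_{Y^n|X^n})$ with $P_{Y^n|X^n}\in{\cal Q}_{0,n}(D)$ satisfying the Markov chain constraint, \thm{Lemma}{equivsta}(4) produces a causal kernel $\overrightarrow{P}_{Y^n|X^n}$ identical to $P_{Y^n|X^n}$. Since the fidelity constraint in \eq{nafs} depends only on the joint distribution $\overrightarrow{P}_{Y^n|X^n}\otimes P_{X^n} = P_{Y^n|X^n}\otimes P_{X^n}$, this kernel lies in $\overrightarrow{{\cal Q}}_{0,n}(D)$. Conversely, any $\overrightarrow{P}_{Y^n|X^n}\in\overrightarrow{{\cal Q}}_{0,n}(D)$ induces a joint $\overrightarrow{P}_{Y^n|X^n}\otimes P_{X^n}$; disintegrating this joint gives a $P_{Y^n|X^n}\in{\cal Q}_{0,n}(D)$, and applying \thm{Lemma}{equivsta}(4) in the other direction shows the resulting $P_{Y^n|X^n}$ satisfies the Markov chain constraint of \eq{rdfon}.

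Next I would verify the objective functionals agree on this common feasible set. By the definition of directed information in the excerpt,
\begin{align*}
\mathbb{I}_{X^n\rightarrow Y^n}(P_{X^n},\overrightarrow{P}_{Y^n|X^n})
= \mathbb{D}\bigl(\overrightarrow{P}_{Y^n|X^n}\otimes P_{X^n}\,\big\|\,P_{Y^n}\times P_{X^n}\bigr),
\end{align*}
while the ordinary mutual information is
\begin{align*}
I(X^n;Y^n)
= \mathbb{D}\bigl(P_{Y^n|X^n}\otimes P_{X^n}\,\big\|\,P_{Y^n}\times P_{X^n}\bigr).
\end{align*}
Under the Markov chain hypothesis, \thm{Lemma}{equivsta}(4) gives $P_{Y^n|X^n} = \overrightarrow{P}_{Y^n|X^n}$ (and hence the same marginal $P_{Y^n}$), so the two relative entropies coincide and $I(X^n;Y^n) = \mathbb{I}_{X^n\rightarrow Y^n}$. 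Combined with the bijection of feasible sets, this yields $R^{\varepsilon}_{0,n}(D) = R^{na}_{0,n}(D)$.

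The main subtlety, and the place I expect to need a careful word, is the interchangeable use of ``a reproduction kernel $P_{Y^n|X^n}$ satisfying the MC'' versus ``a causal kernel $\overrightarrow{P}_{Y^n|X^n}$'': the former is a single $n$-step conditional constrained by a family of conditional-independence relations, whereas the latter is a product of one-step conditionals by construction. \thm{Lemma}{equivsta} bridges the two, but the identification is only unique $P_{X^n}$-almost surely, so I would state the bijection as one between equivalence classes of kernels under $P_{X^n}$-almost sure equality. With that caveat, the equality of the infima in \eq{gpe} is immediate.
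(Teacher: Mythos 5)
Your proposal is correct and follows essentially the same route as the paper's own (very terse) proof: both reduce the claim to Lemma~\ref{equivsta}, using the equivalence of the Markov-chain constraint with the identity $P_{Y^n|X^n}=\overrightarrow{P}_{Y^n|X^n}$ to show that the feasible sets coincide (the constrained ${\cal Q}_{0,n}(D)$ becomes $\overrightarrow{\cal Q}_{0,n}(D)$) and that $I(X^n;Y^n)=\mathbb{I}_{X^n\rightarrow Y^n}(P_{X^n},\overrightarrow{P}_{Y^n|X^n})$ on that set. Your additional care about the bijection holding only up to $P_{X^n}$-almost sure equivalence of kernels is a reasonable refinement the paper leaves implicit.
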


\begin{proof}
If  any of the statements of Lemma~\ref{equivsta} hold then  $I(X^n;Y^n)=I(P_{x^{n}},{\overrightarrow P}_{Y^n|X^n})$, and the fidelity set is (\ref{nafs}).
\end{proof}

\section{Solution  Nonanticipative RDF}
\label{ord}
In this section we give the expression of the nonanticipative reproduction distribution which  achieves the
infimum in (\ref{nardf11}). First, we  note that in view of Theorem~\ref{gp}, the results derived in \cite{pinsker1973} are applicable for $R_{0,n}^{na}(D), R^{na}(D)$, and these results include sufficient conditions for stationary sources to give an optimal reproduction distribution corresponding to stationary source-reproduction pair $\{(X_i,Y_i): i=0,1,\ldots\}$.

Thus, under the conditions in  \cite{pinsker1973} or assuming the solution of $R_{0,n}^{na}(D)$ gives an optimal nonanticipative reproduction distribution which is stationary, and hence ${\overrightarrow P}_{Y^n|X^n}(d{y}^n|x^n)$ is an $(n+1)-$fold  convolution
of stationary conditional distributions, we have the following theorem.

\begin{theorem}
\label{maintheo}
Suppose there exist an interior point of the fidelity set, and the optimal reproduction is stationary.  Then the infimum over $\overrightarrow{\cal Q}_{0,n}(D)$ in (\ref{nardf11}) is
attained by
\begin{align}
\overrightarrow{P}_{Y^n|X^n}^*(dy^n|x^n)=\otimes_{i=0}^{n}\frac{e^{s\rho(T^i{x}^n,T^i{ y}^{n})}
P_{Y_i|Y^{i-1}}^*(d{y}_{i}|{ y}^{i-1})}{\int_{{\cal  Y}_i}
e^{s\rho(T^i{x}^n,T^i{ y}^{n})}P_{Y_i|Y^{i-1}}^*(d{ y}_{i}|{ y}^{i-1})}  \label{crdoo}
\end{align}
where $s\leq 0$ is the Lagrange multiplier associated with the constraint which is satisfied with equality, and
\begin{align}
{ R}^{na}_{0,n}(D)=& sD - \frac{1}{n+1} \sum_{i=0}^{n}\int_{{\cal X}_{0,i}\times{{\cal Y}}_{0,i-1}}
\log\Big(\int_{{{\cal Y}}_{i}}e^{s\rho(T^i{x}^n, T^i{ y}^{n})}\nonumber\\
& P_{Y_i|Y^{i-1}}^{*}(d{y}_{i}|{ y}^{i-1})\Big) \otimes {P}_{X_i|X^{i-1}}(d{x}_i|{ x}^{i-1})\nonumber \\
&\otimes P_{X^{i-1},Y^{i-1}}^*(d{x}^{i-1},d{y}^{i-1}){\label{solrdf}}
\end{align}
where $P_{X^{i-1},Y^{i-1}}^*(\cdot,\cdot)= \overrightarrow{P}_{Y^{i-1}|X^{i-1}}^*(\cdot|\cdot)\otimes P_{X^{i-1}}(\cdot)$.
\end{theorem}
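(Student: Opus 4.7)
\emph{Approach.} The plan is to convert the constrained minimization in (\ref{nardf11}) into an unconstrained Lagrangian problem with multiplier $s\le 0$ attached to the average distortion constraint (\ref{nafs}), and then exploit the causal (tensor) structure of $\overrightarrow{P}_{Y^n|X^n}$ so that the variational problem decouples one stage at a time. Concretely, I would work with
\begin{align}
L_s(\overrightarrow{P}_{Y^n|X^n})\tri \mathbb{I}_{X^n\rightarrow Y^n}(P_{X^n},\overrightarrow{P}_{Y^n|X^n}) -s\int d_{0,n}(x^n,y^n)\,\overrightarrow{P}_{Y^n|X^n}(dy^n|x^n)\otimes P_{X^n}(dx^n) \nonumber
\end{align}
and use the chain-rule decomposition
\begin{align}
\mathbb{I}_{X^n\rightarrow Y^n}=\sum_{i=0}^{n}\mathbb{E}\big[\mathbb{D}\big(P_{Y_i|Y^{i-1},X^i}(\cdot|Y^{i-1},X^i)\,\big\|\,P_{Y_i|Y^{i-1}}(\cdot|Y^{i-1})\big)\big] \nonumber
\end{align}
together with additivity of the single-letter $d_{0,n}$ to write $L_s$ as a sum over $i$ of terms depending, once the induced output marginals are pinned, on only one kernel $P_{Y_i|Y^{i-1},X^i}$.

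\emph{Per-stage optimality.} Fix a candidate family $\{P_{Y_i|Y^{i-1}}^*\}$ for the induced marginals. For each $i$ and each realization $(y^{i-1},x^i)$, the inner problem reduces to
\begin{align}
\inf_{\nu}\Big\{\mathbb{D}\big(\nu\,\big\|\,P_{Y_i|Y^{i-1}}^*(\cdot|y^{i-1})\big)-s\int\rho(x_i,y_i)\,\nu(dy_i)\Big\}, \nonumber
\end{align}
which is solved by the Donsker--Varadhan / Gibbs variational identity: the minimizer is the tilted measure $\nu^*(dy_i)\propto e^{s\rho(x_i,y_i)}P_{Y_i|Y^{i-1}}^*(dy_i|y^{i-1})$ and the minimum value equals $-\log\int e^{s\rho(x_i,y_i)}P_{Y_i|Y^{i-1}}^*(dy_i|y^{i-1})$. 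Tensorizing across $i$ produces exactly the causal kernel (\ref{crdoo}). Self-consistency then closes the argument: the pinned marginals must coincide with those of $\nu^*$ averaged against $P_{X^i,Y^{i-1}}^*\otimes P_{X_i|X^{i-1}}$, with $P_{X^{i-1},Y^{i-1}}^*=\overrightarrow{P}_{Y^{i-1}|X^{i-1}}^*\otimes P_{X^{i-1}}$, which fixes $\{P_{Y_i|Y^{i-1}}^*\}$ uniquely. The interior-point hypothesis ensures $s$ is finite and the distortion constraint binds with equality, while stationarity produces a single-letter $\rho$ in (\ref{crdoo}).

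\emph{Value formula and main obstacle.} To recover (\ref{solrdf}) I would substitute the Gibbs form back into each stage of the relative entropy: the log-ratio becomes $s\rho(T^i x^n,T^i y^n) - \log\int_{{\cal Y}_i} e^{s\rho(T^i x^n,T^i y^n)}P_{Y_i|Y^{i-1}}^*(dy_i|y^{i-1})$. Averaging against $P_{X_i|X^{i-1}}(dx_i|x^{i-1})\otimes P_{X^{i-1},Y^{i-1}}^*(dx^{i-1},dy^{i-1})$, summing over $i$, dividing by $n+1$, and using the active constraint $\frac{1}{n+1}\sum_{i=0}^n\mathbb{E}[\rho(X_i,Y_i)]=D$ collapses the first group of terms into $sD$ and leaves precisely the logarithmic terms of (\ref{solrdf}). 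The main obstacle is the coupling between $P_{Y_i|Y^{i-1},X^i}$ and its induced marginal $P_{Y_i|Y^{i-1}}$, both of which appear in the objective. I would resolve this by the classical ``fix the denominator'' observation: the first variation of $-\mathbb{E}[\log P_{Y_i|Y^{i-1}}]$ with respect to the forward kernel vanishes by the per-stage normalization $\int dP_{Y_i|Y^{i-1},X^i}=1$, so only the direct dependence drives the Euler equation and the tilted Gibbs conditional is the stationary point. The remaining technicalities---justifying the variational calculus on Polish alphabets and identifying $s$ with the slope of $D\mapsto R^{na}_{0,n}(D)$ at an interior $D$---are standard under the stated hypotheses.
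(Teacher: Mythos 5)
The paper itself gives no proof of Theorem~\ref{maintheo}: it defers entirely to \cite{charalambous-stavrou-ahmed2013}, and the derivation there follows essentially the route you propose --- a Lagrangian with multiplier $s\le 0$ on the fidelity constraint, the decomposition of $\mathbb{I}_{X^n\rightarrow Y^n}$ into a sum of conditional relative entropies, and identification of the per-stage stationary point as a tilted (Gibbs) kernel, followed by back-substitution to get the value formula. So your strategy is the correct one, and your treatment of the denominator coupling (the induced marginal $P_{Y_i|Y^{i-1}}$) via the normalization/``golden formula'' argument, the self-consistency condition on $\{P^*_{Y_i|Y^{i-1}}\}$, and the collapse of the distortion terms into $sD$ using the active constraint are all sound.

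There is, however, one genuine gap in your per-stage decoupling. Perturbing the kernel $P_{Y_i|Y^{i-1},X^i}$ affects not only (a) the stage-$i$ relative-entropy and distortion terms and (b) the induced output marginal in the stage-$i$ denominator --- both of which you handle --- but also (c) the averaging measures $P_{X^j,Y^{j-1}}$ that weight every \emph{later} stage $j>i$, since those joint laws are built from the stage-$i$ kernel. The Euler--Lagrange equation at stage $i$ therefore acquires an additional cost-to-go term $g_{i+1,n}(x^i,y^i)$ summarizing the sensitivity of stages $i+1,\dots,n$ to $(x^i,y^i)$, and the exact stationary point is a kernel tilted by $s\rho-g_{i+1,n}$ rather than by $s\rho$ alone; the clean form \eq{crdoo} emerges only when this cost-to-go is constant in $(x_i,y_i)$ given $y^{i-1}$ (so that it cancels in the normalization), which is precisely where the stationarity hypothesis of the theorem is needed, or via a backward induction starting from $g_{n+1,n}=0$. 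Your proposal silently freezes the downstream weighting when varying stage $i$, which is exactly the step that distinguishes the nonanticipative problem from the classical RDF, where there is a single kernel and the fixed averaging measure $P_{X^n}$. To close the argument you would need either to exhibit the cost-to-go term and show it degenerates under the stated hypotheses, or to run the optimization by dynamic programming from $i=n$ downward.
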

\begin{proof}  The derivation is given in \cite{charalambous-stavrou-ahmed2013}.
\end{proof}

The point to be made regarding the optimal reproduction distribution is that, it is nonanticipative, and  as we show in the next section, easy to compute, even for sources with memory.

\section{Coding Theorem}
\label{coding}
\par In this section we show achievability of symbol-by-symbol code with memory without anticipation. We also note
that in view of the equivalence ${ R}^{\varepsilon}_{0,n}(D)={ R}_{0,n}^{na}(D)$, that ${ R}^{na}_{0,n}(D)$ is the OPTA by sequential code (see \cite{tatikonda2000}).

\par The probabilistic realization of the optimal reproduction distribution by an encoder-channel-decoder, is necessary for probabilistic matching of the source
and the channel. Next, we give the precise definition of the realization.

\begin{definition}\label{realdef}
(Realization)
Given a source $\{P_{X_i|X^{i-1}}$ $(d{x}_i|x^{i-1}): \forall i \in {\mathbb N}^n\}$, a general channel
$\{P_{B_i|B^{i-1},A^i,X^i}$ $(d{b}_i|b^{i-1},a^i,x^i): \forall i \in {\mathbb N}^n\}$  is a realization
of the optimal reproduction distribution $\{P_{Y_i|Y^{i-1},X^i}^*(d{y}_i|y^{i-1},x^i): \forall i \in {\mathbb N}^n\}$
of theorem \ref{maintheo}, if there exists a pre-channel encoder
$\{P_{A_i|A^{i-1},B^{i-1},X^i}$ $(d{a}_i|a^{i-1},b^{i-1},x^i): \forall i \in {\mathbb N}^n\}$ and a post-channel
decoder $\{P_{Y_i|Y^{i-1},B^{i}}$ $(d{y}_i|y^{i-1},b^{i}): \forall i \in {\mathbb N}^n\}$ such that
\begin{align}
{\overrightarrow P}^*_{Y^n|X^n}(d{y}^n|x^n)&=\otimes_{i=0}^n{ P}^*_{Y_i|Y^{i-1},X^i}
(d{y}_i|y^{i-1},x^i)\nonumber\\
&=\otimes_{i=0}^n{ P}_{Y_i|Y^{i-1},X^i}
(d{y}_i|y^{i-1},x^i)\label{scmrd}
\end{align}
where the joint distribution from which (\ref{scmrd}) is obtained  is  precisely (\ref{joint}).
Moreover we say that ${ R}^{na}_{0,n}(D)$ is realizable if in addition the realization
operates with average distortion $D$ and $I_{P_{X^n}}(P_{X^n},\overrightarrow{P}_{Y^n|X^n})={ R}^{na}_{0,n}(D)$
\end{definition}

If the optimal reproduction distribution is realizable (see  Definition~\ref{realdef}), then  the data processing inequality holds:
\begin{align}
I_{X^n\rar Y^n}(P_{X^n},{\overrightarrow P}_{Y^n|X^n})\leq I(X^n\rar B^n), \ \forall n \in{\mathbb{N}}. \label{dpi}
\end{align}
If ${ R}^{na}_{0,n}(D)$ is realizable according to Definition \ref{realdef},
then the source is not necessarily matched to the channel. Next, we  prove (under certain conditions) achievability, by first introducing the information
definition of channel capacity. \\
Consider the following average cost set defined by
\begin{align}
{\cal P}_{0,n}(P)\tri\Big\{(X^n,A^n):\frac{1}{n+1}{\mathbb E}\{c_{0,n}(A^n,Y^{n-1})\}\leq P\Big\}. \nonumber
\end{align}
Since we consider the general scenario that (\ref{mc1})-(\ref{mc4}) hold, then we define the information channel capacity from the source to the channel output as  follows \cite{cover-pombra1989}.
\begin{align}
C_{0,n}(P)\tri\sup_{(X^n,A^n)\in{\cal P}_{0,n}(P)}\frac{1}{n+1}I(X^n\rar B^n)\nonumber
\end{align}
and its rate (provided $\sup$ is finite and the limit exists) by
$C(P)=\lim_{n\rar\infty}C_{0,n}(P)$.

Next, we prove achievability of a symbol-by-symbol code.

\begin{theorem}\label{ach}
(Achievability of Symbol-by-Symbol Code with Memory Without Anticipation).\\
Suppose the following conditions hold.

\begin{enumerate}
\item ${ R}^{na}_{0,n}(D)$ has a solution and the optimal reproduction distribution is stationary.

\item $C_{0,n}(P)$ has a solution and the maximizing processes are stationary.

\item The optimal reproduction distribution $\overrightarrow{P}_{Y^n|X^n}(dy^n|x^n)$ given by Theorem \ref{maintheo} is realizable, and ${ R}^{na}_{0,n}(D)$ is also realizable.

\item There exists  $D$ and $P$ such that ${ R}^{na}_{0,n}(D)=C_{0,n}(P)$.
\end{enumerate}
If
\begin{align}
\mathbb{P}\Big\{\sum_{i=0}^{n}{\rho}_{0,i}(T^i{X^n},T^i{Y^n})>(n+1)d\Big\}\leq\epsilon \label{edp}
\end{align}
where ${\mathbb P}$ is taken with respect to $P_{Y^n,X^n}(d{y}^n,d{x}^n)={\overrightarrow P}^*_{Y^n|X^n}(d{y}^n|x^n)\otimes P_{X^n}(d{x}^n)$ then
 there exists an $(n,d,\epsilon,P)$ symbol-by-symbol code with memory without anticipation.
\end{theorem}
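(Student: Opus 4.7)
The plan is to exhibit an $(n,d,\epsilon,P)$ code by invoking realizability (assumption 3) to produce a candidate encoder/decoder pair, and then to verify the two requirements in Definition~\ref{sbsc} separately. Concretely, I would apply assumption 3 together with Theorem~\ref{maintheo} to extract families $\{P_{A_i|A^{i-1},B^{i-1},X^i}\}_{i\in\mathbb{N}^n}$ and $\{P_{Y_i|Y^{i-1},B^i}\}_{i\in\mathbb{N}^n}$ whose composition with the given source and channel builds the joint law (\ref{joint}), whose $(X^n,Y^n)$--marginal equals $\overrightarrow{P}^*_{Y^n|X^n}\otimes P_{X^n}$, and for which $\frac{1}{n+1}\mathbb{I}_{X^n\to Y^n}(P_{X^n},\overrightarrow{P}^*_{Y^n|X^n})=R^{na}_{0,n}(D)$. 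These are the candidate pre-channel encoder and post-channel decoder.

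The excess-distortion requirement then comes for free: since the $(X^n,Y^n)$--marginal of (\ref{joint}) induced by the cascade agrees by construction with the joint law appearing in hypothesis (\ref{edp}), the probability
\begin{align*}
\mathbb{P}\bigl\{d_{0,n}(X^n,Y^n)>(n+1)d\bigr\}
\end{align*}
computed under the constructed scheme coincides with the one assumed to be at most $\epsilon$. For the transmission cost, I would combine the data processing inequality (\ref{dpi}) with the rate matching of assumption 4, namely
\begin{align*}
C_{0,n}(P)=R^{na}_{0,n}(D)=\frac{1}{n+1}\mathbb{I}_{X^n\to Y^n}\bigl(P_{X^n},\overrightarrow{P}^*_{Y^n|X^n}\bigr)\leq \frac{1}{n+1}I(X^n\to B^n),
\end{align*}
to conclude that the cascade saturates the supremum defining $C_{0,n}(P)$. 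Assumptions 1 and 2 then let me take the optimizers of both $R^{na}_{0,n}(D)$ and $C_{0,n}(P)$ to be stationary and identify the channel-input pair $(X^n,A^n)$ of the realization with one attaining $C_{0,n}(P)$; membership in $\mathcal{P}_{0,n}(P)$ yields $\frac{1}{n+1}\mathbb{E}\{c_{0,n}(A^n,Y^{n-1})\}\leq P$, which is exactly the cost constraint of Definition~\ref{sbsc}.

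The main obstacle is the last identification: producing a single encoder for which the induced cascade both reproduces the $\overrightarrow{P}^*_{Y^n|X^n}$ of Theorem~\ref{maintheo} and drives a channel-input process that lies in $\mathcal{P}_{0,n}(P)$. Assumption 4 forces equality of the two extremal directed-information values, but promoting this scalar matching to a distributional match requires exploiting stationarity together with the nonanticipation Markov chains (\ref{mc1})--(\ref{mc4}) and the equivalences of Lemma~\ref{equivsta}; these should let one replace the $X^n$ appearing in the capacity-achieving pair by the given source and still obtain a causal encoder realizing $\overrightarrow{P}^*_{Y^n|X^n}$. Once that identification is carried out, the rest of the argument is bookkeeping: assembling the pieces via (\ref{joint}) and checking the two conditions in Definition~\ref{sbsc} directly.
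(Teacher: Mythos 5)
Your proposal follows essentially the same route as the paper's own proof, which is only a two-sentence sketch citing the matching argument of Gastpar et al.: realizability (conditions 1 and 3) yields the encoder--decoder cascade whose $(X^n,Y^n)$--marginal reproduces $\overrightarrow{P}^*_{Y^n|X^n}\otimes P_{X^n}$ and hence inherits the excess-distortion bound (\ref{edp}), while condition 4 together with the data processing inequality (\ref{dpi}) handles the cost constraint. Your write-up is in fact more detailed than the paper's, and the ``main obstacle'' you flag --- promoting the scalar equality $R^{na}_{0,n}(D)=C_{0,n}(P)$ to an identification of the realization's channel-input process with one lying in $\mathcal{P}_{0,n}(P)$ --- is precisely the step the paper leaves implicit under the word ``matched.''
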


\begin{proof}
The derivation is similar to \cite{gastpar2003}. If conditions (1), (3) hold then the optimal reproduction distribution is realizable, and this realization achieves ${ R}^{na}_{0,n}(D)$. By (4) the source is matched to the channel so that the excess distortion probability of a symbol-by-symbol code  with memory without anticipation satisfies (\ref{edp}).
\end{proof}

\subsubsection{\bf Symbol-by-Symbol Code} It can be shown that if the source is Markov, and the channel is Markov with respect to the source, satisfying
\begin{enumerate}
\item $P_{X_i|X^{i-1}}(x_i|x^{i-1})=P_{X_i|X_{i-1}}(x_i|x_{i-1}), \ \forall i\in\mathbb{N}^n$
\item $P_{B_i|B^{i-1},A^{i},X^i}(d{b}_i|b^{i-1},a^{i},x^i) \\=P_{B_i|B^{i-1},A_{i},X_i}(d{b}_i|b^{i-1},a_{i},x_i), \ \forall i\in\mathbb{N}^n$,
\end{enumerate}
then
 maximizing directed information $I(X^{n}\rightarrow  {B}^{n})$  over non-Markov encoders $\{P_{A_i|A^{i-1},B^{i-1},X^i}: i=0,1,\ldots,n\}$ is equivalent to maximizing it over encoders $\{\overline{P}_{A_i|B^{i-1},X_i}: i=0,1,\ldots,n\}$, and similarly,  maximizing $I(X^{n}\rightarrow  {B}^{n})$  over non-Markov deterministic encoders $\{e_i(x^i,a^{i-1},y^{i-1}): i=1, \ldots, n\}$ is equivalent to the maximization with respect to encoders $\{g_i(x_i,y^{i-1}): i=1, \ldots, n\}$. This result appeared in \cite{charalambous-kourtellaris-hadjicostis}. Thus, based on these two conditions the encoder is symbol-by-symbol Markov with respect to the source, and nothing can be gained by considering an encoder that depends on the entire past of the source causally.

\section{Application}\label{exa}
In this section we  consider the Binary Symmetric Markov source, for which the classical RDF is unsolved and only bounds are known. Then we show that the solution of the nonanticipative information  RDF can be obtained relatively easy. Subsequently, we evaluate the performance of uncoded transmission. It is shown that even this uncoded, unmatched scheme, although sub-optimal ensures the excess distortion probability goes to zero.

\par Consider a Binary Symmetric Markov Source (BSMS(p)),
$P(x_i=0|x_{i-1}=0)=P(x_i=1|x_{i-1}=1)=1-p$ and $P(x_i=1|x_{i-1}=0)=P(x_i=0|x_{i-1}=1)=p$ and
$i=0, 1,\dots,n$. We apply a single letter Hamming distortion criterion $\rho(x,y)=0$ if $x=y$ and $\rho(x,y)=1$ if $x \neq y$.  The objective is to compute ${R}^{na}(D)$.

\begin{proposition}\label{marex1} For a BSMS(p) and single letter distortion criterion we have
\[ { R}^{na}(D) = \left\{ \begin{array}{ll}
         H(m)-H(D) & \mbox{if $D \leq \frac{1}{2}$}\\
        0 & \mbox{otherwise}\end{array}  \right. \]
where $m=1-p-D+2pD$.
\end{proposition}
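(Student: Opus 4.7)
The plan is to exploit the Markov structure of the source and the single-letter Hamming distortion to show, via Theorem~\ref{maintheo}, that the optimal joint law of $(X^n,Y^n)$ is stationary and first-order Markov, so that the directed information rate collapses to the single-letter quantity
\begin{equation*}
R^{na}(D) = I(X_i;Y_i|Y_{i-1}) = H(X_i|Y_{i-1}) - H(X_i|Y_i,Y_{i-1})
\end{equation*}
evaluated at stationarity. The backward-channel decomposition on the right will be convenient because the BSMS is binary and symmetric.

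First I would use the symmetry $(x,y)\mapsto(1-x,1-y)$ of the BSMS to argue that the optimal stationary joint $P^*_{X_i,Y_i}$ is symmetric with $P^*(X_i=x,Y_i=y)=\frac{1-D}{2}$ on the diagonal and $\frac{D}{2}$ off-diagonal (the distortion constraint being tight by the interior-point hypothesis), so that both marginals are Bern($\frac{1}{2}$) and the backward channel satisfies $P^*(X_i|Y_i)=(1-D,D)$. Bayes with the source Markov kernel and $P^*(X_{i-1}|Y_{i-1})=(1-D,D)$ then gives
\begin{equation*}
P^*(X_i=y\,|\,Y_{i-1}=y) = (1-p)(1-D) + pD = m,
\end{equation*}
so $H(X_i|Y_{i-1})=H(m)$.

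For the second term I would posit the backward Markov chain $X_i\leftrightarrow Y_i\leftrightarrow Y_{i-1}$, which yields $H(X_i|Y_i,Y_{i-1})=H(X_i|Y_i)=H(D)$. Verifying this reduces to solving the self-consistency equation
\begin{equation*}
P^*(Y_i|Y_{i-1}) = \sum_{x_i}P^*(Y_i|Y_{i-1},X_i)\,P^*(X_i|Y_{i-1}),
\end{equation*}
where $P^*(Y_i|Y_{i-1},X_i)$ is the exponential form in (\ref{crdoo}). A direct calculation with Lagrange parameter $\lambda=e^s=D/(1-D)$ shows $Y$ to be a symmetric Markov chain with self-transition $\alpha=(m-D)/(1-2D)=1-p$, identical to the source transition, and then $P^*(Y_{i-1}|X_i,Y_i)=P^*(Y_{i-1}|Y_i)$ follows by a direct comparison of the two conditionals, confirming the backward MC and hence $H(X_i|Y_i,Y_{i-1})=H(D)$.

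Combining the two terms gives $R^{na}(D)=H(m)-H(D)$. The regime $D>\frac{1}{2}$ is handled by the trivial reproduction $Y_i\equiv 0$, which attains average distortion $\frac{1}{2}\le D$ with zero rate. The main obstacle is the algebraic step in the previous paragraph: verifying that the posited backward Markov chain is realized by a fixed point of (\ref{crdoo}) and that this fixed point is unique and optimal. Once the pair $\alpha=1-p$, $\lambda=D/(1-D)$ is identified, the remaining computation is short, but existence, uniqueness, and optimality of this fixed point rely on the stationarity and interior-point hypotheses invoked in Theorem~\ref{maintheo} (and in the underlying Gorbunov--Pinsker analysis of \cite{pinsker1973}).
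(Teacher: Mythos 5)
Your proposal is correct, and it reaches the stated formula along a route that overlaps with the paper's at the structural level but differs in how the final closed form is extracted. The paper's proof is only a sketch: it records the steady state of the source, the exponential form of the optimal kernel from Theorem~\ref{maintheo}, the reduction $P^*_{Y_i|X_i,Y^{i-1}}=P^*_{Y_i|X_i,Y_{i-1}}$, and the explicit forward transition matrix with $\alpha=\frac{(1-p)(1-D)}{m}$, $\beta=\frac{p(1-D)}{1-m}$, stopping there; finishing along that line would give $R^{na}(D)=H(p)-mH(\alpha)-(1-m)H(\beta)$, which equals $H(m)-H(D)$ only after further algebra. You instead evaluate $I(X_i;Y_i|Y_{i-1})$ through the backward decomposition $H(X_i|Y_{i-1})-H(X_i|Y_i,Y_{i-1})$, observing that the test channel $P^*(X_i|Y_i)$ is a BSC($D$), that $P^*(X_i=y|Y_{i-1}=y)=(1-p)(1-D)+pD=m$, and that the backward chain $X_i\leftrightarrow Y_i\leftrightarrow Y_{i-1}$ holds; this yields $H(m)-H(D)$ immediately and is consistent with the paper's $(\alpha,\beta)$ (indeed your identities $e^s=D/(1-D)$ and $P^*(Y_i=y|Y_{i-1}=y)=1-p$ check out against them). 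One point you should make explicit rather than fold into ``the joint law is first-order Markov'': the directed information is $\sum_i I(X^i;Y_i|Y^{i-1})$, and collapsing $H(Y_i|Y^{i-1})$ to $H(Y_i|Y_{i-1})$ requires that the $Y$-process itself be Markov, which does not follow from joint Markovity of $(X_i,Y_i)$ alone; it does follow from the memoryless-filter property $P^*(X_i|Y^i)=P^*(X_i|Y_i)$, which your backward-MC fixed-point computation establishes by induction, so the gap is closable with the tools you already have. The $D>\tfrac12$ case via the constant reproduction is fine.
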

\begin{proof}
We describe the main steps. The steady state
distribution of the source is $P(X_i=0)=P(X_i=1)=0.5$ and the reproduction distribution is
\begin{align}
P_{Y_i|X^i,Y^{i-1}}^*=P_{Y_i|X_i,Y^{i-1}}^*=
\frac{e^{s\rho(x_i,y_i)}P(y_i|y^{i-1})}{\sum_{y_i}e^{s{\rho}(x_i,y_i)}P(y_i|y^{i-1})}\nonumber
\end{align}
and we can show that  $P_{Y_i|X_i,Y^{i-1}}^*=P_{Y_i|X_i,Y_{i-1}}^*$ and that
\begin{align}
P_{Y_i|X_i,Y_{i-1}}^*(y_i|x_i,y_{i-1})=\bbordermatrix{~ & 0,0 & 0,1 & 1,0 & 1,1 \cr
                  0 & \alpha & \beta& 1-\beta & 1-\alpha\vspace{0.3cm} \cr
                  1 & 1-\alpha & 1-\beta& \beta &  \alpha \cr}\nonumber
\end{align}
where $\alpha=\frac{(1-p)(1-D)}{1-p-D+2pD}$,  $\beta=\frac{p(1-D)}{p+D-2pD}$.

\begin{figure}
\begin{center}
\includegraphics[bb= -10 33 400 310,scale=0.52]{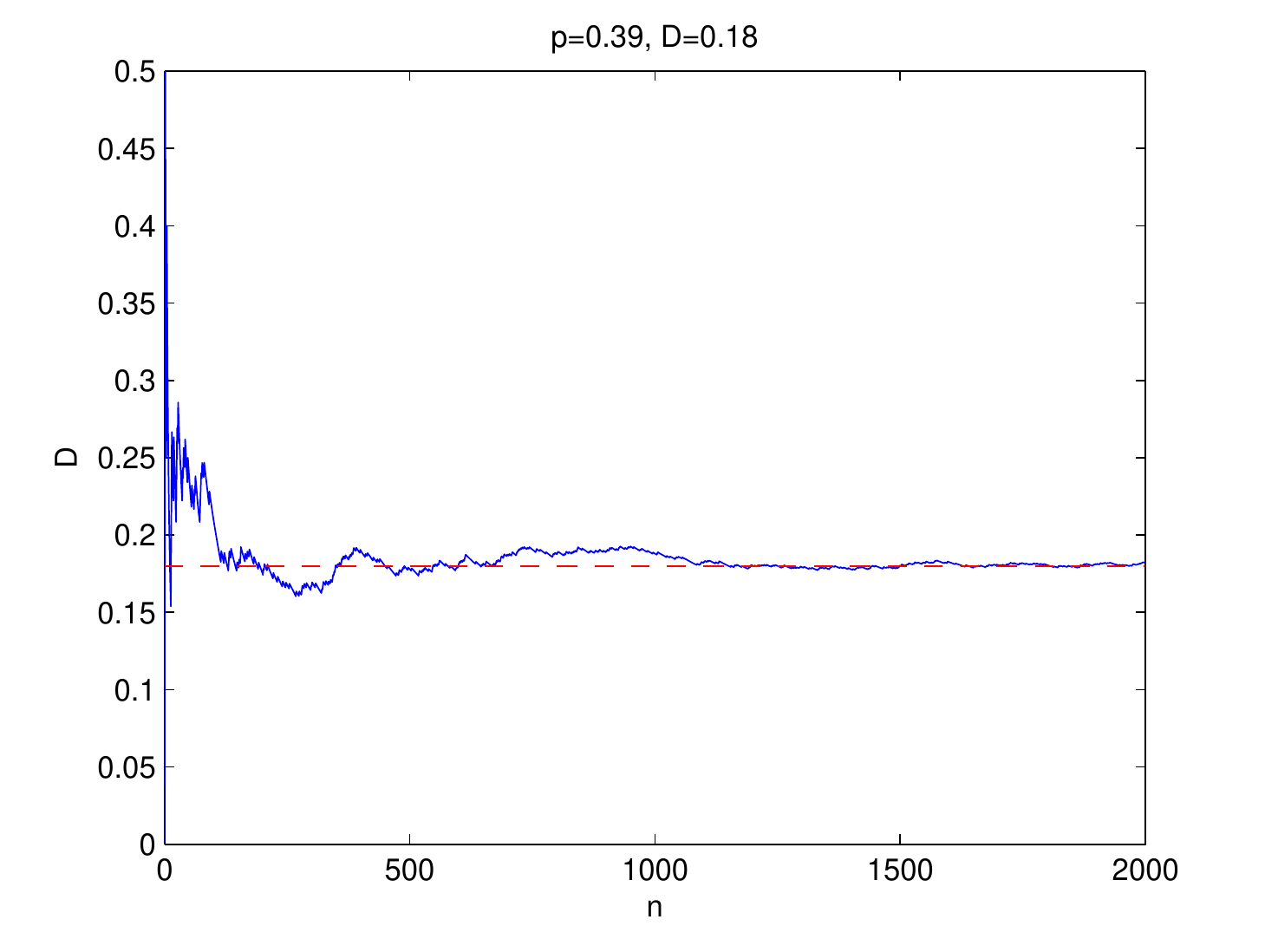}
\caption{The distortion between the source and reproduction symbols for a random realization of the source, as a function of $n$
using the optimal reproduction distribution as the channel and uncoded transmission.}
 \label{figmarex1}
\end{center}
\end{figure}
\end{proof}

\begin{figure}
\begin{center}
\centering
\includegraphics[bb= -10 27 400 310,scale=0.52]{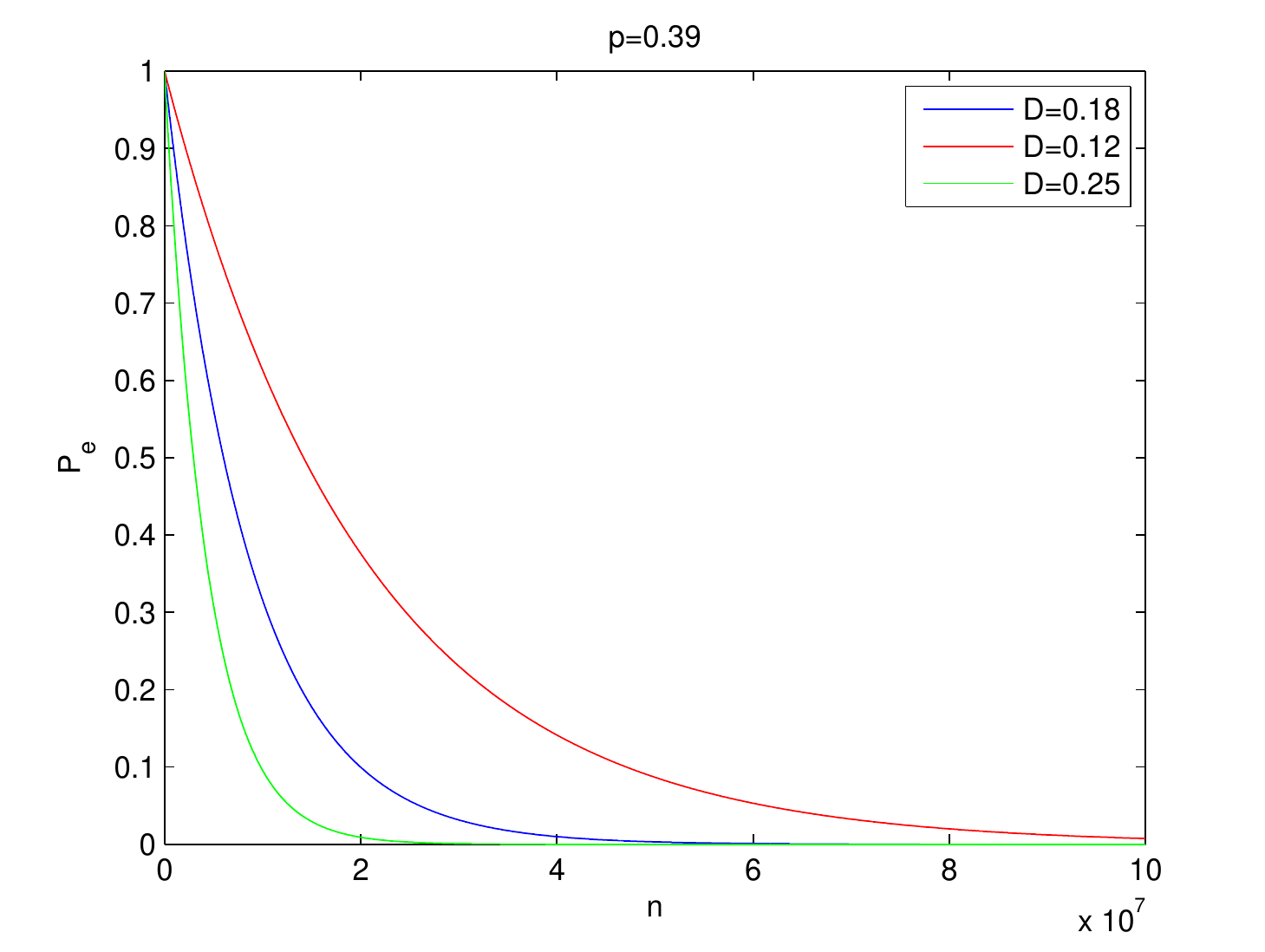}
\caption{Excess Probability of Distortion for $\delta=0.01$.}
 \label{figmarex2}
\end{center}
\end{figure}

\par Next, we discuss symbol-by-symbol uncoded transmission
over a channel characterized via the optimal reproduction distribution. This approach is suboptimal
since the channels capacity is not necessarily matched to the source RDF. The matching is part of on-going research and it could be possible by adding a cost
constrain on the channel. A  realization of the described scheme is shown in Fig.~\ref{figmarex1} , where it is verified that
as the number of channel uses $n$ is increased, the single letter distortion between the source symbol sequence and
the reproduction sequence converges to the average distortion $D$.

\par Next, we bound the excess distortion probability of Theorem \ref{ach}, by applying an extension of Hoeffding's inequality for MCs \cite{glynn2002}, which
bounds the probability of a function of a Markov source. It can be shown that  $\{Z_i \tri (Y_i,X_i): \forall i \in {\mathbb N}  \}$ is Markov. Set ${\rho}(x,y)=x\oplus y$ and let $S_n \tri \sum_{i=0}^{n}{\rho}(X_i,Y_i)$.
Let $d\tri \delta+ \frac{{\mathbb E}[S_n]}{n+1}, \delta >0$.
By Hoeffding's inequality, the excess distortion probability  is bounded by
\bea
P\Big\{S_n   > (n+1) d \Big\}\leq \exp\Big(-\frac{{\lambda}^2
((n+1)\delta -2\|f\|m/{\lambda})^2}{2(n+1){\|f\|}^2m^2}\Big)\nonumber
\eea
where ${\|f\|}=1$, $m=1$,
$\lambda=\min\{p,1-p\}\min\{\alpha,\beta,1-\alpha,1-\beta\}$, for $n>2{\|f\|}m/(\lambda\delta)$.
This bound is illustrated in Fig.~\ref{figmarex2}.  Although, this bound is not tight and holds for
$n$ large enough, it shows the achievability of
Markov sources via uncoded transmission. It might be possible to compute the excess distortion probability in closed form to get tighter bounds.

\section{Conclusions}
This paper considers nonanticipative information RDF and discusses its application to General Source-Channel Matching, generalizing earlier results on uncoded transmission to random processes with memory and nonanticipative feedback.


%



\label{Bibliography}
\bibliographystyle{IEEEtran}
\bibliography{Bibliography}
%
%
%
%
%
%
%

\end{document}